\newcommand{\red}[1]{{{#1}}}
\newcommand{\ignore}[1]{}
\definecolor{cb-salmon-pink}{RGB}{255, 182, 119}
\newcommand{\JuntaTest}{\textsc{Junta-Tester}}
\newcommand{\JuntaUniformity}{\textsc{Junta-Uniformity-Test}}
\newcommand{\UniformityTest}{\textsc{Uniformity-Test}}
\newcommand{\ConsistentJunta}{\textsc{Consistent-Junta-Checker}}
\newcommand{\MLE}{\calA_{\mathrm{MLE}}}
\newcommand{\dtv}{\mathrm{d}_{\mathrm{TV}}}
\title{Testing Junta Truncation}
\author{
 William He\thanks{Carnegie Mellon University. Email: \url{wrhe@cs.cmu.edu}.}
 \and 
 Shivam Nadimpalli\thanks{Columbia University. Email: \url{sn2855@columbia.edu}.} 
\vspace{0.75em}}
\date{\small\today}
\begin{document}

\pagenumbering{gobble}
\hypersetup{linkcolor={black}}

\maketitle
\begin{abstract}
We consider the basic statistical problem of detecting  truncation of the uniform distribution on the Boolean hypercube by juntas. More concretely, we give upper and lower bounds on the problem of distinguishing between i.i.d. sample access to either (a) the uniform distribution over $\zo^n$, or (b) the uniform distribution over $\zo^n$ conditioned on the satisfying assignments of a $k$-junta $f\isazofunc$.

We show that (up to constant factors) $\min\{2^k + \log{n\choose k}, {2^{k/2}\log^{1/2}{n\choose k}}\}$ samples suffice for this task and also show that a $\log{n\choose k}$ dependence on sample complexity is unavoidable. Our results suggest that testing junta truncation requires learning the set of relevant variables of the junta. 
\end{abstract}

\hypersetup{linkcolor={purple}}

\newpage
\pagenumbering{arabic}

\section{Introduction}
\label{sec:intro}

Estimation via samples acquired from a \emph{truncated} distribution---i.e. a distribution that has undergone some form of conditioning---is a classic statistical challenge dating back to the works of Bernoulli~\cite{Bernoulli60}, Galton~\cite{Galton97},  and Pearson~\cite{Pearson02}. 
The premise of inference and learning from truncated distributions suggests an arguably more basic statistical task, namely that of \emph{detecting} if a distribution has been truncated in the first place. We instantiate this question in perhaps its simplest high-dimensional setting by testing for truncations of the uniform distribution on the Boolean hypercube $\zo^n$ by functions of few variables.

The concept of a ``function of few variables'' can be made precise via the notion of \emph{juntas}: 
A function $f\isazofunc$ is a \emph{$k$-junta} if there is a set of coordinates $\{i_1, \ldots, i_k\} \sse [n]$ such that the value of $f(x)$ is completely determined by $(x_{i_1}, \ldots, x_{i_k})$. Starting with the work of Blum~\cite{Blu94}, the problem of learning juntas (which abstracts the problem of learning in the presence of irrelevant features) as well as the related problem of \emph{testing} juntas (i.e.~the problem of distinguishing whether a function is a $k$-junta or is ``far'' from every $k$-junta) have been---and continue to be---the subject of intensive study in theoretical computer science~\cite{BL97,MOS04,FKRSS04,Bla08,Val15,BCELR18,DMN19,ITW21,PRW22,CNY23,CP23,LYT-comp}.

We now turn to a more precise formulation of the question we consider in this paper: Given i.i.d.~sample access to an unknown distribution $\calD$ on $\zo^n$, distinguish with high probability between (a) $\calD$ being the uniform distribution on $\zo^n$, and (b) the uniform distribution on the satisfying assignments\footnote{We say that $x$ is a \emph{satisfying assignment} of a Boolean function $f\isazofunc$ if $f(x) = 1$.} of some $k$-junta (assuming there are not too many satisfying assignments). While the the learnability and testability of the truncated distributions we consider (namely those in Item~(b) above) are well studied problems~\cite{ABR16,CJLW21}, the question we consider is an arguably more basic question and adds to a nascent line of inquiry on testing distribution truncation~\cite{DNS23-convex,DLNS23-ptf}; we defer a detailed discussion of related work to \Cref{subsec:related}.\footnote{\red{\cite{ABR16}~and~\cite{CJLW21} consider the broader class of distributions on $\zo^n$ whose density function is a $k$-junta; our upper bound partially (in particular,~\Cref{alg:uniformity-junta-checker}) holds for this broader class of distributions as well---see the discussion following \Cref{thm:ub} for more on this.}}

\paragraph{Notation.} Before formally stating our results, we first introduce some notation. We will denote the uniform distribution over the $n$-dimensional Boolean hypercube $\zo^n$ by $\calU_n$. For a function $f\isazofunc$, we define its \emph{volume} $\vol(f)$ as 
\[\vol(f) := \Ex_{\bx\sim\zo^n}\sbra{f(\bx)}.\]
Finally, we introduce notation for the class of distributions we consider: 
\begin{equation} \label{eq:junta-def}
    \calJ_n(k,\eps) := \cbra{\calU_n|_{f^{-1}(1)} ~\text{for a $k$-junta}~f\isazofunc~\text{with}~ \vol(f) \leq 1-\epsilon},
\end{equation}
namely the collection of uniform distributions on satisfying assignments of $k$-juntas with volume at most $1-\eps$. We note that $\calU_n$ can be viewed as a truncation of $\calU_n$ by the constant-$1$ function which is a $0$-junta; as such, the assumption on the volume of the junta not being too large is necessary.

\subsection{Our Results} 
\label{subsec:results}

We obtain both upper and lower bounds on the sample complexity of distinguishing truncations of the uniform distribution on $\zo^n$ by juntas.

\subsubsection{Upper Bound}
\label{subsubsec:ub}

We obtain the following upper bound:

\begin{theorem} \label{thm:ub}
    There exists an algorithm \JuntaTest~with the following performance guarantee: Given i.i.d. sample access to an unknown distribution $\calD$, the algorithm draws $O\pbra{\min\{T_1, T_2\}}$ samples where 
    \[
    T_1 := 2^k + \log{n\choose k} \qquad\text{and}\qquad T_2 := {2^{k/2}\log^{1/2}{n\choose k}} + \log{n\choose k},
    \] 
    and has the following performance guarantee:
    \begin{itemize}
        \item If $\calD = \calU_n$, then $\calA$ outputs ``un-truncated'' with probability at least $99/100$; and
        \item If $\calD \in \calJ_n(k, \eps)$, then $\calA$ outputs ``truncated'' with probability at least $99/100$.
    \end{itemize}
\end{theorem}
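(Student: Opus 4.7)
My plan is to reduce the testing problem to a family of standard uniformity-testing problems on $\zo^k$, one per candidate junta-support. The key structural observation is that if $\calD = \calU_n|_{f^{-1}(1)}$ for a $k$-junta $f$ with relevant variable set $S \subseteq [n]$, then for any size-$k$ subset $T \supseteq S$, the projection of $\calD$ onto coordinates in $T$ is uniform on a subset of $\zo^k$ of size $\vol(f) \cdot 2^k \leq (1-\eps) 2^k$, and is therefore $\Omega(\eps)$-far from the uniform distribution on $\zo^k$ in total variation. Under $\calD = \calU_n$, every projection is exactly uniform. The algorithm will therefore enumerate all $\binom{n}{k}$ size-$k$ subsets $T$, run a uniformity tester on each $T$-projection (re-using a single pool of samples across all tests), and reject if any rejects. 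A union bound then forces each tester to have failure probability $O(1/\binom{n}{k})$.

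The two bounds $T_1$ and $T_2$ correspond to two different uniformity testers plugged into this template. For $T_1$, I would simply \emph{learn} each projected distribution: a standard DKW/VC-type bound guarantees that $O((m + \log(1/\delta))/\eps^2)$ samples suffice to produce an empirical distribution on $[m]$ that is $\eps$-close in $\ell_1$ with probability $1-\delta$. Setting $m = 2^k$ and $\delta = \Theta(1/\binom{n}{k})$ then gives $O(T_1)$ samples (treating $\eps$ as constant). For $T_2$, I would instead use a collision-based tester: count $C_T = \sum_{y \in \zo^k} \binom{N_{T,y}}{2}$, where $N_{T,y}$ is the number of samples whose $T$-projection equals $y$. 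Under uniform, $\Ex[C_T] = \binom{N}{2}/2^k$; under a correct $T \supseteq S$, $\Ex[C_T] \geq \binom{N}{2}/((1-\eps) 2^k)$, yielding a gap of order $\eps N^2/2^k$. Sharp concentration of this degree-$2$ $U$-statistic (sub-Gaussian at scale $N/2^{k/2}$ in the bulk) then allows distinguishing the two scenarios with failure probability $\delta$ using $O(\sqrt{2^k\log(1/\delta)} + \log(1/\delta))$ samples, which at $\delta = 1/\binom{n}{k}$ matches $T_2$.

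I expect the main technical obstacle to be establishing the sub-Gaussian tail for the collision statistic at the scale needed to union-bound over all $\binom{n}{k}$ candidates: a naive Chebyshev bound loses polynomial factors in $\binom{n}{k}$, so the argument needs higher-moment or Bennett/Bernstein-type bounds---most cleanly obtained via Poissonization, which makes the per-bin counts independent and recasts $C_T$ as a sum of independent $\binom{N_{T,y}}{2}$ variables amenable to standard concentration. The additive $\log\binom{n}{k}$ term in $T_2$ handles the low-collision regime, where Poisson-tail behavior dominates over the Gaussian bulk. Once both testers are in hand, the overall algorithm simply runs whichever has smaller sample complexity for the given $n$ and $k$, producing the $\min\{T_1, T_2\}$ bound.
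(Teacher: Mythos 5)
Your proposal is correct, and its two halves compare differently with the paper. The $T_2$ half is essentially the paper's \JuntaUniformity~(\Cref{alg:uniformity-junta-checker}): enumerate all $\binom{n}{k}$ candidate supports, run a uniformity test on each projection with failure probability $\delta = \Theta(1/\binom{n}{k})$, and union bound. The only difference is that the paper invokes the tester of Diakonikolas et al.~\cite{DGPP} as a black box, whereas you propose to re-derive the high-probability guarantee of the collision statistic; note that the $\sqrt{2^k\log(1/\delta)} + \log(1/\delta)$ dependence on $\delta$ is exactly the main theorem of~\cite{DGPP} and is substantially more delicate than the Poissonization-plus-Bernstein sketch suggests (naive amplification only yields $2^{k/2}\log(1/\delta)$), so you should simply cite it. The $T_1$ half is a genuinely different route: you learn each $k$-dimensional projection to total variation error $O(\eps)$ with confidence $1-\Theta(1/\binom{n}{k})$, using $O((2^k+\log\binom{n}{k})/\eps^2)$ samples, whereas the paper's \ConsistentJunta~(\Cref{alg:consistent-junta-checker}) checks for a consistent low-volume $k$-junta and bounds the expected number of consistent hypotheses among the $2^{2^k}\binom{n}{k}$ juntas by $(1-\eps)^T$. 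Both give $O(2^k+\log\binom{n}{k})$ for constant $\eps$. The paper's variant has one-sided error under truncation, a roughly $1/\log\frac{1}{1-\eps}$ rather than $1/\eps^2$ dependence, and explicitly recovers the truncating junta (supporting the paper's ``testing requires learning'' message); your variant avoids the $2^{2^k}$ enumeration, structurally unifies the two algorithms, and extends to any distribution whose projection onto the relevant coordinates is $\eps$-far from uniform, e.g.\ junta-density distributions as in the paper's footnote.
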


Our upper bound, which we prove in \Cref{sec:junta-ub}, is based on two distinct and strikingly simple algorithms for testing junta truncation: 
\begin{enumerate}
    \item The first algorithm, \ConsistentJunta~(cf. \Cref{alg:consistent-junta-checker}), assumes the distribution is truncated as long as there is a junta consistent with the samples. As such, it can be viewed as a ``consistent hypothesis checker''~\cite{kearns1994introduction} and has sample complexity $T_1$. Note that if the samples are from a truncated distribution, then this algorithm learns the truncating junta.

    \item The second algorithm, \JuntaUniformity~(cf.~\Cref{alg:uniformity-junta-checker}), relies on the easy observation that in the truncated case, the distribution over the relevant variables in the junta will be far from uniform (thanks to the assumption on the volume on the function, cf.~\Cref{eq:junta-def} and the following discussion). The algorithm thus draws a sufficiently large number of samples and iterates over all subsets of $k$ variables, and relies on the uniformity test of Diakonikolas et al.~\cite{DGPP} to distinguish between the untruncated and truncated distributions.
\end{enumerate}

Note that \ConsistentJunta~is computationally inefficient as it iterates over all $\Theta(2^{2^k}{n\choose k})$ $k$-juntas over $\zo^n$ whereas \JuntaUniformity~runs in time $O(T_2)$ due to the uniformity test subroutine of~\cite{DGPP}.

\begin{remark}
    The combined use of the two algorithms points to an intriguing sample-versus-time complexity trade-off with the parameter $k$. In particular, note that when 
    \[
    2^k \ll \log{n\choose k} \qquad\text{i.e.}\qquad
    k \ll \red{\log\log n},
    \]
    we have that computationally {inefficient}~\ConsistentJunta~has lower sample complexity than the computationally efficient~\JuntaUniformity. 
\end{remark}

\subsubsection{Lower Bound}
\label{subsubsec:lb}

We now turn to our lower bound on the sample complexity of testing junta truncation:

\begin{theorem} \label{thm:lb}
    Let $\calA$ be any algorithm with the same performance guarantee as in \Cref{thm:ub}. Then $\calA$ must draw $\Omega(\log{\binom{n}{k}})$ samples from $\calD$.
\end{theorem}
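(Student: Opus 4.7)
My plan is to reduce to a Le Cam style two-point lower bound using a random parity hard family. For each $S \in \binom{[n]}{k}$ let $f_S \isazofunc$ denote the parity $f_S(x) := \bigoplus_{i \in S} x_i$, and set $\calD_S := \calU_n|_{f_S^{-1}(1)}$. Since $\vol(f_S) = 1/2$, every $\calD_S$ belongs to $\calJ_n(k, 1/2)$, so any algorithm $\calA$ drawing $m$ samples and achieving the guarantee of \Cref{thm:ub} must output ``truncated'' with probability at least $99/100$ on every such $\calD_S$. By convexity it does so on the mixture $\calD_{\mathrm{mix}}^{(m)} := \Ex_{\mathbf{S} \sim \binom{[n]}{k}} \sbra{\calD_{\mathbf{S}}^{\otimes m}}$, and combined with success on $\calU_n^{\otimes m}$ this forces $\dtv\pbra{\calU_n^{\otimes m}, \calD_{\mathrm{mix}}^{(m)}} \geq 98/100$. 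It therefore suffices to upper bound this TV distance whenever $m = o(\log\binom{n}{k})$.

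For this I would pass to the $\chi^2$-divergence via $\dtv \leq \tfrac{1}{2}\sqrt{\chi^2}$. Noting that $\calD_S(x)/\calU_n(x) = 2 \cdot \mathbf{1}\sbra{\bigoplus_{i \in S} x_i = 1}$, expanding the squared likelihood ratio with two independent copies $\mathbf{S}, \mathbf{T} \sim \binom{[n]}{k}$ reduces the calculation to evaluating
\[
\Pr_{x \sim \calU_n}\sbra{\textstyle\bigoplus_{i \in S} x_i = 1 \text{ and } \bigoplus_{i \in T} x_i = 1},
\]
which equals $1/2$ when $S = T$ and $1/4$ otherwise, because distinct nonempty parities over $\mathbb{F}_2^n$ are independent uniform bits under $\calU_n$. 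A routine calculation then yields the clean identity
\[
\chi^2\pbra{\calD_{\mathrm{mix}}^{(m)},\ \calU_n^{\otimes m}} \;=\; \frac{2^m - 1}{\binom{n}{k}},
\]
which together with the $\dtv$ lower bound above forces $2^m = \Omega\pbra{\binom{n}{k}}$, i.e., $m = \Omega\pbra{\log\binom{n}{k}}$.

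The main obstacle is selecting the right hard family: a naive ``consistent hypothesis'' family like $f_S = \bigwedge_{i \in S} x_i$ would be much too easy, since $\calD_S$ then deterministically fixes every coordinate in $S$ to $1$ and one can recover $S$ with only $O(\log n)$ samples by counting always-one coordinates and applying a union bound, yielding at best an $\Omega(\log n)$ lower bound. Parity juntas are designed precisely to defeat such low-order statistics: every proper sub-parity of $f_S$ remains uniformly distributed under $\calD_S$, so the only detectable deviation from $\calU_n$ occurs at the full $k$-th order parity, and locating this parity among the $\binom{n}{k}$ candidate supports is the ``learning parities''-style bottleneck that produces the $\log\binom{n}{k}$ lower bound.
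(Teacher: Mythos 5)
Your proposal is correct, and it reaches the same $\Omega(\log\binom{n}{k})$ bound via the same hard family (uniform conditioned on a parity over a uniformly random $k$-subset, mixed over the subset), but the technical route differs from the paper's. The paper reduces to the likelihood-ratio/MLE distinguisher $\MLE$ between $\calU_n$ and the parity mixture, identifies the likelihood ratio at $T=\log\binom{n}{k}$ samples with the number $\bw$ of weight-$k$ vectors in the dual of the span of the samples, and then proves an \emph{anti-concentration} statement ($\Pr[\bw\ge 1]>0.01$, via $\E[\bw]=1$, $\E[\bw^2]\le 2$ and Chebyshev) showing the MLE false-positives too often under $\calU_n$. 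You instead run the standard Le Cam/Ingster argument: the two-sided $99/100$ guarantee forces $\dtv(\calU_n^{\otimes m}, \calD_{\mathrm{mix}}^{(m)})\ge 98/100$, and the pairwise-independence of distinct nontrivial parities gives the exact identity $\chi^2(\calD_{\mathrm{mix}}^{(m)},\calU_n^{\otimes m})=(2^m-1)/\binom{n}{k}$, so $\dtv\le\tfrac12\sqrt{\chi^2}$ forces $2^m=\Omega(\binom{n}{k})$. The two arguments compute essentially the same second moment (your $\chi^2$ is, up to normalization, the paper's $\E[\bw^2]$), but your framing buys something concrete: the $\chi^2$/TV bound controls \emph{every} algorithm directly by data processing, so you never need the paper's step asserting that it suffices to rule out the MLE distinguisher (and you avoid any fragility in matching constants there, since the paper only shows the MLE's type-I error exceeds $0.01$ while an arbitrary $99/100$-tester only forces the optimal test's total error below $0.02$). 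What the paper's route buys in exchange is a more structural statement — with constant probability a weight-$k$ parity consistent with $\log\binom{n}{k}$ uniform samples exists — which directly supports the "testing truncation requires learning the relevant variables" narrative. One small point to make explicit in a write-up: your hard distributions have volume exactly $1/2$, so the reduction applies to the guarantee of Theorem~\ref{thm:ub} for $\eps\le 1/2$ (the paper's proof has the same implicit restriction).
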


Our lower bound, which we prove in \Cref{sec:lb}, is obtained by proving that $\log{n\choose k}$ samples are required to distinguish between $\calU_n$ and $\calU_n$ conditioned on the satisfying assignments of a parity function on a uniformly random set of coordinates of size $k$; our proof is linear algebraic and reduces to proving an elementary anti-concentration property of a coding-theoretic random variable. We remark that the lower bound in \Cref{thm:lb} matches the upper bounds in \Cref{thm:ub} up to constant factors for $k = \Theta(1)$ and $k = \red{\Theta(
\log\log n)}$. Our lower bound also suggests that in order to test truncation by a $k$-junta, \emph{learning} the set of relevant variables of the junta is in a sense necessary.

\subsection{Related Work} 
\label{subsec:related}

Estimation and inference over truncated distributions have been increasingly studied by the theoretical computer science community in recent years~\cite{DGTZcolt19,FKTcolt20, FKTcolt20,DKTZcolt21}; we refer the reader to Section~1.3 of \cite{DNS23-convex} for more on this. The question we consider can be viewed as a specific instance of the following broader question: 

\begin{question} \label{question:potato}
{Given independently drawn samples from some unknown distribution, determine whether the data was drawn from a known underlying probability distribution  ${\cal P}$, versus from ${\cal P}$ conditioned on some unknown \emph{truncation set $S$} of measure bounded away from $1$.}
\end{question}

The work of Rubinfeld and Servedio~\cite{RubinfeldS09} considers the problem of testing for truncations of the uniform distribution on $\zo^n$ by {monotone} Boolean functions, and the work of De, Nadimpalli, and Servedio~\cite{DNS23-convex} considers an instant of \Cref{question:potato} where $\calP$ is the $n$-dimensional standard Gaussian distribution and the truncating set $S$ can be an arbitrary {convex} subset of $\R^n$. We note that both \cite{RubinfeldS09} and \cite{DNS23-convex} obtain a striking separation between the sample complexities of learning and testing truncation by their respective classes of truncating sets: For both monotone functions and convex sets, the best known learning algorithms require essentially $n^{O(\sqrt{n})}$ samples, whereas testing truncation in both cases requires $O(n)$ samples.\footnote{This can be viewed as yet another case of the ``emerging analogy between monotone Boolean functions and convex sets''~\cite{DNS22}.} 

In addition to \cite{RubinfeldS09,DNS23-convex}, a forthcoming manuscript of De, Li, Nadimpalli, and Servedio~\cite{DLNS23-ptf} considers a broader instantiation of \Cref{question:potato} wherein $\calP$ is taken to be an arbitrary ``{hypercontractive}'' distribution (cf. Chapters~9 and 10 of~\cite{ODonnell2014}) and the truncating set $S$ can be the satisfying assignments of a low-degree polynomial threshold function. 

\section{Preliminaries}
\label{sec:preliminaries}

All probabilities and expectations will be with respect to the uniform distribution, unless otherwise indicated. We use boldfaced characters such as $\bw, \bx,$ and $\bV$ to denote random variables (which may be real-valued, vector-valued or set-valued; the intended type will be clear from the context).
We write $\bx \sim \calD$ to indicate that the random variable $\bx$ is distributed according to the probability distribution $\calD$. We write $\dtv(\calD_1,\calD_2)$ to denote the \emph{total variation distance} or \emph{statistical distance} between the distributions $\calD_1$ and $\calD_2$.

We will sometimes identify the Boolean hypercube $\zo^n$ with the vector space $\F_2^n$. For $x\in\zo^n$ we will write $|x|$ to denote the {Hamming weight} of $x$, i.e. $|x| := \sumi x_1$; viewed as an element of $\F_2^n$, we have $|x| = |\{i\in[n] : x_i\neq 0\}|$. Given a subset $S\sse[n]$ and $x\in\zo^n$, we will write 
\[x_S := (x_{i})_{i\in S}\]
and will view $x_S$ as an element of $\zo^S$ in the natural way.
\section{Upper Bounds}
\label{sec:junta-ub}

In this section, we establish our upper bound on testing junta truncation (cf. \Cref{thm:ub}). As discussed earlier, we give two distinct algorithms for detecting truncation by juntas: In \Cref{subsec:testing-via-learning}, we show that it is possible to test junta truncation using $O(2^k+\log\binom{n}{k})$ samples via learning. Our second algorithm, which we present in \Cref{subsec:testing-via-collisions}, relies on a uniformity test due to Diakonikolas et al.~\cite{DGPP} and requires $O({2^{k/2}\log^{1/2}{n\choose k}} + \log{n\choose k})$ samples.

\subsection{Testing Truncation via a Consistent Hypothesis Checker}
\label{subsec:testing-via-learning}

The truncation tester we present in this section can be viewed as a ``consistent hypothesis finder'' (see for example~\cite{kearns1994introduction}) where the algorithm simply checks if there exists a junta that is consistent with all the samples and outputs ``truncated'' if this is indeed the case. We note that this algorithm is computationally inefficient, and has runtime $O(2^{2^k}{n\choose k})$ as it must iterate over all $k$-juntas over $\zo^n$.

\begin{algorithm}[t]
    \addtolength\linewidth{-5.5ex}
    \vspace{0.5em}
    \textbf{Input:} $\calD \in \{\calU_n\} \sqcup \calJ_n(k,\eps)$\\[0.25em]
    \textbf{Output:} ``Un-truncated'' or ``truncated''

    \

    $\ConsistentJunta(\calD)$:
    \begin{enumerate}
        \item Set 
        \[T := 100\pbra{2^k + \log{n \choose k}}\log^{-1}\pbra{\frac{1}{1-\eps}}\]
        and draw $\x{1}, \ldots, \x{T} \sim \calD$.
        \item If there exists a $k$-junta $f\isazofunc$ with $\Vol(f) \leq 1-\eps$ such that for all $i\in[T]$ we have $f(\x{i}) = 1$, then ouput ``truncated;'' otherwise, output ``un-truncated.''
    \end{enumerate}
    
    \caption{Testing junta truncation via a consistent hypothesis checker.}
    \label{alg:consistent-junta-checker}
\end{algorithm}

\begin{proposition} \label{prop:consistent-junta-ub}
    The algorithm \ConsistentJunta~(cf. \Cref{alg:consistent-junta-checker}) has the following performance guarantee:
    \begin{itemize}
        \item If $\calD = \calU_n$, then it outputs ``un-truncated'' with probability at least $99/100$; and
        \item If $\calD \in \calJ_n(k,\eps)$, then it outputs ``truncated'' with probability $1$.
    \end{itemize}
\end{proposition}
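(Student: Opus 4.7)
The proof splits into the two bullet points of the proposition, one of which is essentially immediate and the other of which is a textbook consistent-hypothesis / Occam's-razor argument.

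For the completeness direction (the case $\calD \in \calJ_n(k,\eps)$), the plan is to simply observe that if $\calD = \calU_n|_{f^{-1}(1)}$ for some $k$-junta $f$ with $\vol(f) \leq 1-\eps$, then every draw $\bx^{(i)} \sim \calD$ automatically satisfies $f(\bx^{(i)}) = 1$. Hence the witnessing junta $f$ itself certifies the existence of a consistent $k$-junta in Step~2, so the algorithm outputs ``truncated'' deterministically.

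For the soundness direction (the case $\calD = \calU_n$), the plan is a union bound over all $k$-juntas on $\zo^n$ of volume at most $1-\eps$. I will first count such juntas: there are $\binom{n}{k}$ choices for the relevant coordinate set and at most $2^{2^k}$ choices for the truth table on those coordinates, so the total number of candidates is at most $2^{2^k}\binom{n}{k}$. Next, for any fixed such junta $f$, I will use the fact that $\Pr_{\bx \sim \calU_n}[f(\bx)=1] = \vol(f) \leq 1-\eps$, so the probability that all $T$ i.i.d.\ uniform samples satisfy $f$ is at most $(1-\eps)^T$. A union bound then yields
\[
\Pr[\text{algorithm outputs ``truncated''}] \;\leq\; 2^{2^k}\binom{n}{k}(1-\eps)^T.
\]
Taking logs and substituting the chosen $T = 100(2^k + \log\binom{n}{k})/\log(1/(1-\eps))$ will make this bound at most $1/100$, as required.

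I do not expect any real obstacles: the counting of $k$-juntas and the per-hypothesis failure probability are both elementary, and the choice of $T$ in the algorithm is calibrated exactly so that the union bound closes. The only minor bookkeeping point is to note that $\log(1/(1-\eps))$ appears in the denominator of $T$ precisely to cancel the $-\log(1-\eps)$ factor that arises when one takes the logarithm of $(1-\eps)^T$, so the sample complexity scales correctly in terms of $\eps$ while the resulting failure probability is driven below $1/100$ by the constant $100$ out front.
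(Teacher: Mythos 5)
Your proposal is correct and takes essentially the same route as the paper: the paper's argument bounds the expected number of consistent juntas by $2^{2^k}\binom{n}{k}(1-\eps)^T \leq 1/100$ and applies Markov's inequality, which is exactly your union bound over the at most $2^{2^k}\binom{n}{k}$ candidate juntas, with the choice of $T$ cancelling the $\log(1/(1-\eps))$ factor just as you describe. The completeness direction is handled identically (the truncating junta itself is consistent with every sample, so the algorithm outputs ``truncated'' with probability $1$).
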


\begin{proof}
    Note that if $\calD\in\calJ_n(k,\eps)$, then clearly the algorithm never outputs ``un-truncated.'' It therefore suffices to show that when $\calD = \calU_n$, the algorithm outputs ``truncated'' with probability at most $1/100$. 
    Let $\bX$ be the random variable that counts the number of $k$-juntas with volume at most $(1-\eps)$ that are consistent with the sample set $(\x{1},\ldots,\x{T})$, i.e.
    \[\bX := \sum_{\calD' \in\calJ_n(k, \eps)} \mathbf{1}\cbra{\x{i} \in \supp(\calD') ~\text{for all}~i\in[T]}\]
    where $\supp(\calD')$ denotes the support of the distribution $\calD'$.
    We then have 
    \begin{align*}
        \Ex_{\calU_n}\sbra{\bX} &= \sum_{\calD'\in\calJ_n(k,\eps)} \Prx_{\calU_n}\sbra{\x{i} \in \supp(\calD') ~\text{for all}~i\in[T]}\\
        &\leq \sum_{\calD'\in\calJ_n(k,\eps)} \pbra{1-\eps}^T\\
        &\leq 2^{2^k}{n\choose k}\cdot (1-\eps)^T \\ 
        &= 2^{2^k}{n\choose k} \cdot \pbra{\frac{1}{2}}^{100\pbra{2^k + \log{n\choose k}}}\\
        &\leq \frac{1}{100},
    \end{align*}
    where the first inequality follows from the fact that $\mathrm{d}_{\mathrm{TV}}(\calU_n, \calD') \geq \eps$, and the second inequality follows from the fact that there are at most $2^{2^k}{n\choose k}$ many $k$-juntas over $\zo^n$.
    Using Markov's inequality, we thus have that 
    \[\Prx_{\x{1}, \ldots, \x{T}\sim\calU_n}\sbra{\text{\Cref{alg:consistent-junta-checker} outputs ``truncated''}} = \Prx_{\calU_n}\sbra{\bX \geq 1} \leq \frac{1}{100},\]
    completing the proof.
\end{proof}

\subsection{Testing Truncation via Uniformity Testing}
\label{subsec:testing-via-collisions}

We now give a different algorithm for testing junta truncation that relies on the uniformity test of Diakonikolas et al.~\cite{DGPP}. The algorithm relies on the simple observation that if the input distribution has been truncated by a junta on the variables in some set $S$, then the resulting distribution on $\zo^S$ is far from uniform. We first formally state the guarantee of the uniformity test of~Diakonikolas et al.~\cite{DGPP}:

\begin{proposition}[Theorem~2 of~\cite{DGPP}] \label{prop:uniformity-test}
    Given $\eps, \delta > 0$, there exists an algorithm, \UniformityTest$(\eps,\delta)$, which given i.i.d. sample access to a distribution $\calD$ over $[m]$, draws 
    \[T := \Theta\pbra{\frac{1}{\eps^2}\pbra{\sqrt{m\log\pbra{\frac{1}{\delta}}} + \log\pbra{\frac{1}{\delta}}}}\]
    samples from $\calD$, does an $O(T)$-time computation, and has the following performance guarantee:
    \begin{itemize}
        \item If $\calD = \calU_{[m]}$ where $\calU_{[m]}$ is the uniform distribution over $[m]$, then it outputs ``un-truncated'' with probability at least $1-\delta$; and 
        \item If $\mathrm{d}_{\mathrm{TV}}(\calU_{[m]}, \calD) \geq \eps$, then it outputs ``truncated'' with probability at least $1-\delta$.
    \end{itemize}
\end{proposition}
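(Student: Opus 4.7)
The plan is to analyze a collision-based statistic, the standard route to optimal uniformity testing. Given samples $\bx^{(1)}, \ldots, \bx^{(T)} \sim \calD$ over $[m]$, I would consider the pairwise collision count
\[
\bC := \sum_{1 \leq i < j \leq T} \mathbf{1}\cbra{\bx^{(i)} = \bx^{(j)}},
\]
whose expectation equals $\binom{T}{2} \cdot \|\calD\|_2^2$. Under the uniform null this is $\binom{T}{2}/m$, while under the $\eps$-far alternative Cauchy--Schwarz gives $\|\calD - \calU_{[m]}\|_2 \geq 2\eps/\sqrt{m}$, so $\|\calD\|_2^2 \geq (1 + 4\eps^2)/m$ and the expected collision count under the alternative exceeds the null's by an additive $\Theta(T^2 \eps^2 / m)$. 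The tester will threshold $\bC$ at roughly the midpoint of these two expectations.

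The first step is the classical constant-confidence analysis: a direct second-moment computation bounds $\mathrm{Var}[\bC]$ by terms proportional to $T^2 \|\calD\|_2^2$ and $T^3 \|\calD\|_3^3$, and Chebyshev's inequality then suffices whenever $T = \Omega(\sqrt{m}/\eps^2)$. To amplify to confidence $\delta$, the textbook median-of-means approach applied to $\Theta(\log(1/\delta))$ independent batches already yields $T = O(\sqrt{m}\log(1/\delta)/\eps^2)$, from which the second additive term $\log(1/\delta)/\eps^2$ is essentially immediate. Obtaining the sharper $\sqrt{m\log(1/\delta)}/\eps^2$ term, however, requires a tighter concentration argument: I would apply a Bernstein-type inequality directly to $\bC$, viewed as a U-statistic, using per-pair variance bounds governed by $\|\calD\|_\infty$ together with a preprocessing ``flattening'' step that lets one assume $\|\calD\|_\infty = O(1/m)$. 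In the sub-Gaussian regime of Bernstein's inequality this delivers a $\sqrt{m\log(1/\delta)}/\eps^2$ sample bound, and in the linear regime the additional $\log(1/\delta)/\eps^2$ summand takes over.

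The main obstacle is precisely this sharp concentration step: a naive variance-plus-median analysis only yields $\sqrt{m}\log(1/\delta)/\eps^2$ rather than the optimal $\sqrt{m\log(1/\delta)}/\eps^2$, losing a square-root factor. Recovering it requires either a Bernstein analysis of the collision U-statistic (which is delicate because heavy domain elements can dominate the variance, necessitating the flattening preprocessing) or the Poissonization and chaining argument used in~\cite{DGPP}. Since the tight lower bound matching this rate is not needed for the algorithmic upper bound, the proof effort is concentrated entirely on the sharper upper-tail concentration of the collision statistic.
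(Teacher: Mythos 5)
First, a point of comparison: the paper does not prove this statement at all --- it is quoted as Theorem~2 of~\cite{DGPP} and used as a black box inside \Cref{alg:uniformity-junta-checker}. So the only meaningful benchmark for your proposal is the proof in~\cite{DGPP} itself.

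Your outline is an accurate roadmap but not a proof, and the gap sits exactly where you yourself locate it. The constant-confidence part is fine (the expectation gap $\Theta(T^2\eps^2/m)$ via Cauchy--Schwarz is correct, though note that even getting $\sqrt{m}/\eps^2$ rather than $\sqrt{m}/\eps^4$ out of Chebyshev already requires the refined variance analysis of the collision statistic, which you assert rather than derive), and the observation that naive median-of-means amplification only gives $\sqrt{m}\log(1/\delta)/\eps^2$ is correct. But the entire content of the theorem beyond these standard facts is the improvement to $\sqrt{m\log(1/\delta)}/\eps^2$, and for that step your proposal only asserts that ``a Bernstein-type inequality applied to $\bC$ as a U-statistic, plus flattening, delivers the bound.'' This is not routine and cannot be invoked as a black box: the collision count is a \emph{degenerate} U-statistic of dependent indicator pairs, so off-the-shelf Bernstein bounds for independent summands do not apply, its upper tail under the null is not sub-Gaussian at the scale $T^2\eps^2/m$ throughout the claimed parameter range, and establishing the needed tail bound is precisely the technical contribution of~\cite{DGPP}, who carry out a careful (Poissonized) analysis of an equivalent chi-square--type statistic rather than a one-line Bernstein application. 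Writing that the step ``requires either a Bernstein analysis \ldots or the argument used in~\cite{DGPP}'' defers the theorem to the reference instead of proving it. If your goal is to match the paper's treatment, a citation suffices and no proof is needed; if your goal is a self-contained proof, the sharp upper-tail concentration of the (flattened, Poissonized) collision/chi-square statistic must actually be carried out, and that is a substantial argument, not a corollary of standard inequalities.
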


\begin{algorithm}[t]
    \addtolength\linewidth{-5.5ex}
    \vspace{0.5em}
    \textbf{Input:} $\calD \in \{\calU_n\} \sqcup \calJ_n(k,\eps)$\\[0.25em]
    \textbf{Output:} ``Un-truncated'' or ``truncated''
    
    \
    
    $\JuntaUniformity(\calD)$:
    \begin{enumerate}
        \item Set 
        \[T := \Theta\pbra{\frac{1}{\eps^2}\pbra{2^{k/2}\cdot{\log^{1/2}\binom{n}{k}} + \log{n\choose k}}}\qquad\text{and}\qquad \delta := \Theta\pbra{\frac{1}{{n\choose k}}},\]
        and draw $T$ samples $\x{1}, \ldots, \x{T} \sim \calD$.
        \item For each $S\subseteq[n]$ with $|S|=k$: 
        \begin{enumerate}
            \item Run $\UniformityTest(\eps,\delta)$ (cf.~\Cref{prop:uniformity-test}) on samples $(\x{1}_S,\ldots, \x{T}_S)$ identifying $[m] \equiv \zo^S$. 
            \item If \UniformityTest~outputs ``truncated,'' halt and output ``truncated.''
        \end{enumerate}
        \item Otherwise, output ``un-truncated.''
    \end{enumerate}
    
    \caption{Testing junta truncation via uniformity testing.}
    \label{alg:uniformity-junta-checker}
\end{algorithm}

With this in hand, the analysis of \Cref{alg:uniformity-junta-checker} is rather straightforward:

\begin{proposition} \label{prop:junta-uniformity}
    The algorithm \JuntaUniformity~(cf.~\Cref{alg:uniformity-junta-checker}) has the following performance guarantee:
    \begin{itemize}
        \item If $\calD = \calU_n$, then it outputs ``un-truncated'' with probability at least $99/100$; and 
        \item If $\calD\in\calJ_n(k,\eps)$, then it outputs ``truncated'' with probability at least $99/100$.
    \end{itemize}
\end{proposition}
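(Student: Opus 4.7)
The plan is to verify both required items by reducing each to a single invocation of the guarantee of \UniformityTest from \Cref{prop:uniformity-test}, combined with a union bound over all $\binom{n}{k}$ subsets. First I would observe that the sample count $T$ and failure probability $\delta = \Theta(1/\binom{n}{k})$ in \Cref{alg:uniformity-junta-checker} are chosen precisely so that each call to \UniformityTest$(\eps,\delta)$ on samples drawn from a distribution over $[m]\equiv\zo^S$ (with $m=2^k$) is valid: plugging $m=2^k$ and $\log(1/\delta) = \Theta(\log\binom{n}{k})$ into the sample requirement $\Theta(\eps^{-2}(\sqrt{m\log(1/\delta)}+\log(1/\delta)))$ of \Cref{prop:uniformity-test} yields $\Theta(\eps^{-2}(2^{k/2}\log^{1/2}\binom{n}{k}+\log\binom{n}{k}))$, matching $T$.

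For the completeness case ($\calD = \calU_n$), the marginal of $\bx_S$ under $\calD$ is the uniform distribution on $\zo^S$ for every size-$k$ subset $S\subseteq[n]$. By \Cref{prop:uniformity-test}, each of the $\binom{n}{k}$ invocations of \UniformityTest incorrectly outputs ``truncated'' with probability at most $\delta$. A union bound then bounds the overall probability that \JuntaUniformity outputs ``truncated'' by $\binom{n}{k}\cdot\delta \leq 1/100$, provided the hidden constant in the definition of $\delta$ is chosen sufficiently small.

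For the soundness case ($\calD\in\calJ_n(k,\eps)$), I would write $\calD = \calU_n|_{f^{-1}(1)}$ for some $k$-junta $f$ with $\vol(f)\leq 1-\eps$, let $R\subseteq[n]$ denote the (at most $k$) relevant variables of $f$, and pick any size-$k$ set $S^*\supseteq R$. A direct calculation (using that $f$ depends only on coordinates in $R\subseteq S^*$) shows that the marginal of $\bx_{S^*}$ under $\calD$ places mass $(2^k\vol(f))^{-1}$ on each $y\in\zo^{S^*}$ with $f(y_R)=1$ and mass $0$ elsewhere; splitting the TV-distance sum into these two cases collapses to $1-\vol(f)\geq\eps$. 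Hence when the loop reaches $S=S^*$, \Cref{prop:uniformity-test} guarantees that \UniformityTest outputs ``truncated'' with probability at least $1-\delta\geq 99/100$, at which point \JuntaUniformity halts and correctly outputs ``truncated.''

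No real obstacle is anticipated: the proof is essentially a union bound plus a direct application of \Cref{prop:uniformity-test}. The only point that warrants care is the soundness-side marginal TV-distance calculation when $f$ has strictly fewer than $k$ relevant variables, which is handled cleanly by extending $R$ to an arbitrary size-$k$ superset $S^*$ and noting that the algorithm's iteration over all size-$k$ subsets is guaranteed to reach such an $S^*$.
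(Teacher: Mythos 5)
Your proposal is correct and follows essentially the same route as the paper's proof: a union bound over all $\binom{n}{k}$ subsets in the completeness case, and in the soundness case invoking the \UniformityTest{} guarantee on the coordinates of a size-$k$ set containing the junta's relevant variables, where the conditioned marginal has total variation distance at least $\eps$ from uniform. Your explicit computation of that marginal and its TV distance (yielding exactly $1-\vol(f)\geq\eps$), and your remark about padding the relevant-variable set up to size $k$, are just slightly more detailed versions of steps the paper asserts directly.
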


\begin{proof}    
    Note that if $\calD = \calU_n$, then for each $S\sse[n]$ with $|S|=k$, we have that $(\x{1}_S,\dots,\x{T}_S)$ is distributed according to the uniform distribution on $\zo^S$. 
    It thus follows by the guarantee of \UniformityTest~(cf. \Cref{prop:uniformity-test}) that we output ``truncated'' with probability at most $\delta$ with $\delta$ as defined in Item~1 of \Cref{alg:uniformity-junta-checker}. Union bounding over all ${n\choose k}$ sets $S\sse[n]$ of size $k$, we get (for an appropriate choice of constant hidden by the $\Theta(\cdot)$ when setting $\delta$) that the \JuntaUniformity~outputs ``truncated'' with probability at most $1/100$.

    On the other hand, suppose $\calD \in \calJ_n(k,\eps)$; in particular, suppose $\calD = \calU_n|_{f^{-1}(1)}$ where the function $f\isazofunc$ is a $k$-junta on the variables in $S^\ast \sse[n]$ with $|f^{-1}(1)|\leq (1-\eps)2^n$. In this case, note that the $(\x{1}_{S^\ast}, \ldots, \x{T}_{S^\ast})$ is drawn from a distribution on $\zo^{S^{\ast}}$ which has variation distance at least $\eps$ from the uniform distribution on $\zo^{S^{\ast}}$, and so \UniformityTest~will output ``truncated'' with probability at least $1-\delta \gg 99/100$ (once again, for an appropriate choice of constant hidden by the $\Theta(\cdot)$ when setting $\delta$).
\end{proof}

\section{Lower Bound}
\label{sec:lb}

In this section, we establish \Cref{thm:lb} by showing that any algorithm with the performance guarantee as in \Cref{thm:ub} must draw $\Omega(\log{n \choose k})$ samples from $\calD$.
We obtain our lower bound by showing that it is hard to distinguish between the uniform distribution $\calU_n$ and the uniform distribution truncated by a (negated) \emph{parity} function. Throughout this section, it will be convenient for us to identify $\zo^n$ with $\F_2^n$.

\begin{definition} \label{def:parity}
Given a subset $S\sse[n]$, we write $\chi_S:\F_2^n\to\F_2$ for the (negated) \emph{parity} {function on $S$} defined by 
\[\chi_S(x) := 1 - \sum_{i\in S} x_i.\]
In particular, note that each $\chi_S$ is a $|S|$-junta. Let $\calP_n(k)$ be the collection of truncations of the uniform distribution over $\zo^n$ by parities on $k$ variables, i.e. 
\[\calP_n(k) := \cbra{\calU_n|_{\chi_S^{-1}(1)} : |S| = k}.\]
\end{definition}

\Cref{thm:lb} follows immediately from the following:

\begin{proposition} \label{prop:parity-lb}
    Let $\calA$ be any algorithm which given i.i.d. sample access to a distribution $\calD\in\{\calU_n\}\cup\calP_n(k)$, has the following performance guarantee: 
    \begin{itemize}
        \item If $\calD = \calU_n$, then $\calA$ outputs ``un-truncated'' with probability $99/100$; and
        \item If $\calD \in \calP_n(k)$, then $\calA$ outputs ``truncated'' with probability at least $99/100$.
    \end{itemize}
    Then $\calA$ must draw $\Omega({n\choose k})$ samples from $\calD$.
\end{proposition}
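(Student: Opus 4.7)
The plan is a two-point / Le Cam argument comparing $\calU_n^{\otimes T}$ with the uniform mixture
\[
Q \;:=\; \frac{1}{\binom{n}{k}} \sum_{|S|=k} \pbra{\calU_n\big|_{\chi_S^{-1}(1)}}^{\otimes T}
\]
over parity-truncated $T$-sample distributions. Any tester satisfying the $99/100$ guarantee on both sides must, after averaging its success probability over the uniformly random choice of $S$, distinguish $\calU_n^{\otimes T}$ from $Q$ with total-variation gap at least $98/100$. So it suffices to upper bound $\chi^2(Q, \calU_n^{\otimes T})$ and invoke $\dtv \le \tfrac12\sqrt{\chi^2}$ to force a contradiction for small $T$, giving the $\Omega(\log\binom{n}{k})$ lower bound required by \Cref{thm:lb}.

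Identifying $\zo^n$ with $\F_2^n$, for each $k$-subset $S$ the pointwise likelihood ratio $(\calU_n|_{\chi_S^{-1}(1)})(x)/\calU_n(x)$ equals $2\cdot\mathbf{1}[1_S\cdot x = 0]$, where $1_S \in \F_2^n$ is the indicator vector of $S$ and the dot product is computed over $\F_2$. Multiplying over the $T$ i.i.d.\ samples and averaging over $S$ yields
\[
\frac{dQ}{d\calU_n^{\otimes T}}(x_1,\ldots,x_T) \;=\; \frac{2^T}{\binom{n}{k}}\cdot N(x_1,\ldots,x_T),
\]
where $N(x_1,\ldots,x_T)$ counts the $k$-subsets $S$ consistent with all samples, i.e. $1_S\cdot x_j = 0$ for every $j\in[T]$. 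Consequently $\chi^2(Q,\calU_n^{\otimes T}) + 1 = \tfrac{4^T}{\binom{n}{k}^2}\,\Ex_{\calU_n^{\otimes T}}[N^2]$, and the task reduces to computing $\Ex[N^2] = \sum_{S,S'}\Pr[1_S\cdot x_j = 1_{S'}\cdot x_j = 0~\forall j]$.

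The only substantive step is the pairwise-correlation computation, which is the ``elementary anti-concentration'' the authors allude to and is pure $\F_2$-linear algebra: any two distinct nonzero vectors in $\F_2^n$ are linearly independent, so for distinct $k$-subsets $S \ne S'$ the linear forms $x\mapsto 1_S\cdot x$ and $x\mapsto 1_{S'}\cdot x$ are jointly uniform on $\F_2^2$ when $x\sim\calU_n$. Hence each sample contributes a factor of $\tfrac14$ to every off-diagonal pair and $\tfrac12$ to the diagonal, giving
\[
\Ex[N^2] \;=\; \binom{n}{k}\cdot 2^{-T} \;+\; \pbra{\binom{n}{k}^2 - \binom{n}{k}}\cdot 4^{-T}.
\]
Substituting collapses the expression to $\chi^2(Q,\calU_n^{\otimes T}) = (2^T - 1)/\binom{n}{k}$.

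Thus for $T \le \log_2\binom{n}{k} - C$ with an appropriate absolute constant $C$, the $\chi^2$-divergence (and hence $\dtv$) drops strictly below the threshold forced by the tester's $99/100$ guarantees, a contradiction; so $T = \Omega(\log\binom{n}{k})$. The only real obstacle is the $\F_2$-linear-independence observation, but once the setup is in place that observation is immediate, and the rest is routine Le Cam bookkeeping. I note that the bound established here is $\Omega(\log\binom{n}{k})$ --- matching the target theorem --- and that anything strictly stronger would violate the $O(\log\binom{n}{k})$-sample consistent-hypothesis upper bound of \Cref{prop:consistent-junta-ub}, so the $\chi^2$-analysis is tight up to constants.
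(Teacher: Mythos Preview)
Your argument is correct and yields the right bound, $\Omega(\log\binom{n}{k})$; the ``$\Omega(\binom{n}{k})$'' appearing in the statement of \Cref{prop:parity-lb} is evidently a typo, since that would contradict the paper's own upper bound and the surrounding text (e.g.\ \Cref{thm:lb}, \Cref{prop:mle-goal}) makes clear the intended bound is logarithmic.

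Your route differs from the paper's. The paper first reduces to the optimal (likelihood-ratio) distinguisher $\MLE$, then rewrites the event ``$\MLE$ outputs truncated on uniform samples'' as $\{\bw\ge 1\}$, where $\bw$ counts weight-$k$ vectors in the dual of the sample span; it then shows $\Pr[\bw\ge 1]>0.01$ via a first/second-moment computation ($\E[\bw]=1$, $\E[\bw^2]\le 2$) followed by a Chebyshev-based contradiction argument. You instead run the standard Le Cam/$\chi^2$ template directly on the mixture $Q$, computing $\chi^2(Q,\calU_n^{\otimes T})=(2^T-1)/\binom{n}{k}$ and invoking $\dtv\le\tfrac12\sqrt{\chi^2}$. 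The two are really the same computation in different clothing: your $\E[N^2]$ and the paper's $\E[\bw^2]$ are identical (weight-$k$ indicator vectors orthogonal to all samples), and both hinge on the same $\F_2$-linear-algebra fact that distinct nonzero vectors are automatically independent. What your packaging buys is that the $\chi^2$ route is shorter and avoids the explicit passage through $\MLE$ and the somewhat ad hoc Chebyshev tail-sum contradiction in \Cref{lemma:DONE}; what the paper's packaging buys is a more concrete combinatorial interpretation of the obstruction (the random code $\bV$ almost surely contains a weight-$k$ codeword), which connects the lower bound to the intuition that the tester must ``find the relevant variables.''
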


Note that any algorithm $\calA$ with the performance guarantee as in \Cref{prop:parity-lb} can distinguish between $\calU_n$ and the (uniform) \emph{mixture} distribution on $\calP_n(k)$ with probability at least $99/100$; in particular, it suffices for us to prove a sample-complexity lower bound against algorithms that distinguish between these two distributions. It thus suffices to prove a sample complexity lower bound against the natural ``maximum likelihood estimate''-based distinguisher. 

\begin{notation}
We will sometimes write $\overline{\bx} := (\x{1},\ldots,\x{T})$ and $\overline{\by} := (\y{1},\ldots,\y{T})$) for brevity.
\end{notation}

We next formally define the MLE-based distinguisher: 

\begin{definition} \label{def:mle-distinguisher}
    Given $T$ i.i.d. samples $(\x{1},\ldots,\x{T})$ drawn from some unknown distribution $\calD \in \{\calU_n\} \cup \calP_n(k)$, we write $\MLE$ for the algorithm that outputs ``truncated'' if
    \[
        \Ex_{\calD \in \calP_{n}(k)}\sbra{\Prx_{\y{1},\ldots,\y{T}\sim\calD}\sbra{\overline{\by} = \overline{\bx}}}\\
        \geq \Prx_{\y{1},\ldots,\y{T}\sim\calU_n}\sbra{\overline{\by} = \overline{\bx}},
    \]
    and ``un-truncated" otherwise.
\end{definition}

\subsection{A Lower Bound Against $\MLE$}
\label{subsec:final-lb}

In the rest of this section, we establish the following proposition which together with \Cref{def:mle-distinguisher} implies \Cref{prop:parity-lb}, which in turn implies \Cref{thm:lb}.

\begin{proposition} \label{prop:mle-goal}
    Given i.i.d. sample access to $\calD$ where 
    \[\calD \in \{\calU_n\} \cup \calP_n(k),\]
    suppose $\MLE$ has the following performance guarantee:
    \begin{itemize}
        \item If $\calD = \calU_n$, then $\MLE$ outputs ``un-truncated'' with probability at least $99/100$; and
        \item If $\calD \in \calP_n(k)$, then $\MLE$ outputs ``truncated'' with probability at least $99/100$.
    \end{itemize}
    Then $\MLE$ must draw at least $\log{n\choose k}$ samples from $\calD$.
\end{proposition}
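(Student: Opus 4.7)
The plan is to show that when $T < \log \binom{n}{k}$, the distribution $P := \calU_n^{\otimes T}$ of $T$ i.i.d.\ uniform samples and the mixture distribution $Q := \frac{1}{\binom{n}{k}} \sum_{|S|=k} (\calU_n|_{\chi_{S}^{-1}(1)})^{\otimes T}$ are close enough in total variation distance that $\MLE$ cannot achieve the required guarantee. The MLE defined in the paper compares $Q(\overline{\bx})$ and $P(\overline{\bx})$ and is therefore precisely the Bayes-optimal test for distinguishing $P$ from $Q$ under a uniform prior on the two hypotheses. Any test satisfying the guarantee must have total conditional error at most $2/100$, and the Bayes-optimal test achieves total conditional error exactly $1 - \dtv(P, Q)$; hence, it suffices to prove $\dtv(P, Q) < 98/100$ whenever $T < \log \binom{n}{k}$.

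To bound the total variation distance, I would go through the $\chi^2$-divergence via the standard Cauchy--Schwarz inequality $\dtv(P, Q) \leq \tfrac{1}{2}\sqrt{\chi^2(Q\,\|\,P)}$. A short calculation shows that the likelihood ratio equals
\[
\frac{Q(\overline{\bx})}{P(\overline{\bx})} = \frac{2^T}{\binom{n}{k}} \cdot N(\overline{\bx}), \qquad \text{where} \qquad N(\overline{\bx}) := \left|\left\{S \in \tbinom{[n]}{k} : \chi_S(\x{i}) = 1 \text{ for all } i \in [T]\right\}\right|.
\]
The random variable $N(\overline{\bx})$ has a natural coding-theoretic interpretation: it counts the number of weight-$k$ codewords in the kernel of the $T \times n$ matrix over $\F_2$ whose rows are the samples. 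Consequently, $\chi^2(Q\,\|\,P) + 1 = \frac{4^T}{\binom{n}{k}^2} \cdot \Ex_P[N(\overline{\bx})^2]$.

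The linear algebra enters in computing the second moment $\Ex_P[N^2] = \sum_{S_1, S_2} \Pr_P[\chi_{S_1}(\x{i}) = \chi_{S_2}(\x{i}) = 1 \text{ for all } i]$; this is the ``elementary anti-concentration of a coding-theoretic random variable'' alluded to in the introduction. The crucial observation is that any two distinct $k$-subsets $S_1 \neq S_2$ have characteristic vectors that are linearly independent over $\F_2$, so on each sample $\x{i}$ the constraints $\langle S_1, \x{i}\rangle = 0$ and $\langle S_2, \x{i}\rangle = 0$ are independent events of probability $1/2$ each. This pairwise independence collapses the second moment to $\Ex_P[N^2] = \binom{n}{k} 2^{-T} + \binom{n}{k}(\binom{n}{k} - 1)\, 4^{-T}$, and after simplification I obtain $\chi^2(Q\,\|\,P) \leq 2^T/\binom{n}{k}$. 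Whenever $T < \log \binom{n}{k}$, this is strictly less than $1$, so $\dtv(P, Q) < 1/2 < 98/100$, contradicting the performance guarantee on $\MLE$.

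The main obstacle is really just the bookkeeping in the $\chi^2$ computation and the careful identification of $\MLE$ with the Bayes-optimal distinguisher between $P$ and $Q$. The linear-algebraic heart of the argument---that distinct $k$-subsets impose pairwise independent parity constraints---is elementary but is exactly what keeps $N(\overline{\bx})$ spread out enough (rather than concentrated near its mean) to prevent $P$ and $Q$ from being distinguishable unless $T$ is at least $\log \binom{n}{k}$.
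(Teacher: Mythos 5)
Your argument is correct, but it takes a genuinely different route from the paper's. The paper fixes $T = \log\binom{n}{k}$ and shows that $\MLE$ itself errs on the uniform side with noticeable probability: \Cref{lemma:mle-to-weight} identifies the event that $\MLE$ outputs ``truncated'' with the event that at least one weight-$k$ vector lies in the dual code $\bV$ of the sample matrix, and \Cref{lemma:DONE} shows, using $\Ex[\bw]=1$, $\Ex[\bw^2]\le 2$, and a tail-sum/Chebyshev argument, that $\Prx[\bw\ge 1]>1/100$. You instead run the standard second-moment ($\chi^2$) method for mixture-testing lower bounds: you compute the likelihood ratio $Q(\overline{x})/P(\overline{x}) = 2^T N(\overline{x})/\binom{n}{k}$, bound $\chi^2(Q\,\|\,P)\le 2^T/\binom{n}{k}$, conclude $\dtv(P,Q)<1/2$ whenever $T<\log\binom{n}{k}$, and then use that $\MLE$ is the likelihood-ratio (Bayes-optimal) test, whose total error equals $1-\dtv(P,Q)>2/100$. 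The computational heart is the same in both proofs---distinct nonempty $k$-sets give linearly independent parity constraints over $\F_2$, hence pairwise independence per sample and the same second-moment count ($\Ex_P[N^2]=\binom{n}{k}2^{-T}+\binom{n}{k}(\binom{n}{k}-1)4^{-T}$ is exactly the paper's $\Ex[\bw^2]\le 2$ at the threshold)---but the payoffs differ: your version rules out \emph{every} tester with $T<\log\binom{n}{k}$ samples, so it in fact proves \Cref{prop:parity-lb} directly and makes rigorous the paper's informal reduction to $\MLE$, and it covers all $T<\log\binom{n}{k}$ rather than only the threshold value; the paper's version avoids the $\chi^2$ machinery, needing only an elementary anti-concentration statement, and gives a concrete picture of how $\MLE$ fails (a consistent parity survives with constant probability at $T=\log\binom{n}{k}$). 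Your bookkeeping checks out, so this is a valid and arguably more standard alternative proof.
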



To prove that $\MLE$ drawing $\log\binom{n}{k}$ samples does not satisfy the performance guarantee, we prove \Cref{lemma:mle-to-weight} which reduces our problem to proving an anticoncentration bound for a coding-theoretic random variable. Finally, \Cref{lemma:DONE} completes the proof by providing the anticoncentration bound needed.

We define our random variable as follows:

\begin{definition} \label{def:codesss}
    Let $\bV \leq \F_2^n$ be a random linear subspace obtained by the following process:
    \begin{enumerate}
        \item Draw $\v{1},\ldots,\v{T} \sim\F_2^n$ uniformly at random where $T := \log{n\choose k}$.
        \item Set $\bV := \mathrm{span}(\v{1},\ldots,\v{T})^\perp$.
    \end{enumerate}
    We also define 
    \[\bw := \abs{\{x \in \bV : |x| = k\}}.\]
\end{definition}

The following lemma relates the failure probability of $\MLE$ to a tail probability of the random variable $\bw$:

\begin{lemma} \label{lemma:mle-to-weight}
    Given $T := \log{n\choose k}$ samples $\x{1},\ldots, \x{T}$ drawn from $\calU_n$, we have 
    \[\Prx_{\calU_n}\sbra{\MLE(\x{1},\ldots,\x{T})~\text{outputs ``truncated''}} = \Prx_{\calU_n}\sbra{\bw \geq 1}\]
    with $\bw$ as in \Cref{def:codesss}.
\end{lemma}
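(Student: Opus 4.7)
The plan is to unpack the $\MLE$'s decision rule and show it reduces exactly to the event $\{\bw \geq 1\}$. First, I would compute both sides of the $\MLE$ inequality explicitly. On the untruncated side, since the samples are drawn i.i.d. from $\calU_n$, we have $\Pr_{\overline{\by}\sim\calU_n}[\overline{\by} = \overline{\bx}] = 2^{-Tn}$. On the truncated side, for a fixed $S\sse[n]$ with $|S|=k$, the support of $\calU_n|_{\chi_S^{-1}(1)}$ has size exactly $2^{n-1}$ (since $\chi_S$ has volume $1/2$), so
\[
\Prx_{\overline{\by}\sim\calU_n|_{\chi_S^{-1}(1)}}\sbra{\overline{\by} = \overline{\bx}} = \frac{\prod_{i=1}^T \mathbf{1}\{\chi_S(\x{i}) = 1\}}{2^{T(n-1)}}.
\]
Averaging uniformly over $S$ (which gives the uniform mixture over $\calP_n(k)$) and rearranging, the $\MLE$'s decision rule becomes equivalent to the inequality
\[
\sum_{|S|=k}\prod_{i=1}^T\mathbf{1}\cbra{\chi_S(\x{i}) = 1} \;\geq\; \frac{\binom{n}{k}}{2^T}.
\]

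Second, I would plug in $T = \log\binom{n}{k}$, which makes the right-hand side equal to $1$, so the $\MLE$ outputs ``truncated'' if and only if the sum on the left is at least $1$.

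Third, I would reinterpret the sum. For each $S$ with $|S| = k$, let $\mathbf{1}_S \in \F_2^n$ be the indicator vector; the map $S \mapsto \mathbf{1}_S$ is a bijection between $k$-subsets of $[n]$ and weight-$k$ vectors in $\F_2^n$. Since $\chi_S(x) = 1 - \sum_{i \in S}x_i$ in $\F_2$, we have $\chi_S(\x{i}) = 1$ if and only if $\langle \mathbf{1}_S, \x{i}\rangle = 0$ over $\F_2$. Hence the product of indicators is $1$ if and only if $\mathbf{1}_S$ is orthogonal to every $\x{i}$, i.e., $\mathbf{1}_S \in \mathrm{span}(\x{1},\ldots,\x{T})^\perp$. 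Identifying $\x{1},\ldots,\x{T}$ with the $\v{1},\ldots,\v{T}$ from \Cref{def:codesss} (both are i.i.d. uniform on $\F_2^n$), the sum equals $|\{x \in \bV : |x| = k\}| = \bw$, and the claim follows.

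There is no real obstacle here; the proof is entirely a matter of careful bookkeeping, and the only small subtlety is noticing that the ``consistent with the sample'' condition under parity truncation is precisely the orthogonality condition defining the dual code $\bV$, together with the fact that the threshold $\binom{n}{k}/2^T$ collapses to $1$ for the chosen value of $T$.
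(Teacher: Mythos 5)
Your proof is correct and follows essentially the same route as the paper's: compute both likelihoods explicitly (using $\Vol(\chi_S)=1/2$), rearrange the $\MLE$ decision rule into the threshold $\binom{n}{k}/2^T = 1$, and identify the count of consistent parities with $\bw$ via the orthogonality condition $\langle \mathbf{1}_S, \x{i}\rangle = 0$ defining the dual space $\bV$. Your writeup is if anything slightly more explicit than the paper's in spelling out the bijection between $k$-subsets and weight-$k$ vectors.
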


\begin{proof}
    By \Cref{def:mle-distinguisher}, we have that the probability of $
    \MLE$ outputting ``truncated'' given samples from $\calU$ is given by
    \begin{equation} \label{eq:melon}
    { \Prx_{\x{i}\sim\calU_n}\sbra{\Ex_{\calD \in \calP_{n}(k)}{}\sbra{\Prx_{\substack{\y{i}\sim\calD}}\sbra{\overline{\by}} = \overline{\bx}} \geq \Prx_{\y{i}\sim\calU_n}\sbra{\overline{\by}} = \overline{\bx}}.}
    \end{equation}
    Note, however, that for fixed $x^{(1)},\ldots,x^{(T)}$ we have 
    \begin{equation} \label{eq:potato}        {\Prx_{\y{i}\sim\calU_n}\sbra{\overline{\by} = \overline{x}} = \frac{1}{2^{nT}},}
    \end{equation}
    and also that 
    \begin{equation} \label{eq:tomato}
        {\Ex_{\calD \in \calP_{n}(k)} \sbra{\Prx_{\substack{\y{i}\sim\calD}}\sbra{\overline{\by} = \overline{x}}} = \frac{\abs{\cbra{S : |S| = k ~\text{and}~\chi_{S}(x^{(i)}) = 1~\text{for}~i\in[T]}}}{|\calP_n(k)|2^{(n-1)T}}.}
    \end{equation}
    This relies on the fact that $\Vol(\chi_S) = 1/2$  for all $S\neq \emptyset$. Combining \Cref{eq:melon,eq:potato,eq:tomato}, we get that
    \begin{align*}
        \Prx_{\calU_n}\sbra{\MLE~\text{outputs ``truncated''}}
        &=\Prx_{\calU_n}\sbra{\small\abs{\cbra{S : |S| = k ~\text{and for all}~i\in[T],\chi_{S}(\x{i}) = 1}} \geq \frac{|\calP_n(k)|}{2^T}} \\
        &=\Pr[\bw\geq 1]
    \end{align*}
    as $|\calP_n(k)| = {n\choose k}$ and recalling our choice of $T = \log{n\choose k}$.
\end{proof}

With \Cref{lemma:mle-to-weight} in hand, the proof of the lower bound is completed by the following lemma:

\begin{lemma} \label{lemma:DONE}
    We have 
    $\Pr[\bw \geq 1] > 0.01.$
\end{lemma}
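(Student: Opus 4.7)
The plan is to establish the bound via the second moment method. Concretely, I would show $\E[\bw] = 1$ and $\E[\bw^2] \le 2$, and then invoke Paley--Zygmund (equivalently Cauchy--Schwarz in the form $\Pr[\bw \ge 1] \ge \E[\bw]^2/\E[\bw^2]$) to conclude $\Pr[\bw \ge 1] \ge 1/2 \gg 0.01$.

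For the first moment, I would use linearity of expectation. For any nonzero $x \in \F_2^n$ and a uniformly random $\v{i} \in \F_2^n$, we have $\Pr[\langle x, \v{i}\rangle = 0] = 1/2$, so by independence of the $\v{i}$'s,
\[
\Pr[x \in \bV] \;=\; 2^{-T} \;=\; \binom{n}{k}^{-1}.
\]
Summing over the $\binom{n}{k}$ vectors of weight exactly $k$ gives $\E[\bw] = 1$.

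For the second moment, expand $\E[\bw^2] = \sum_{x,y : |x|=|y|=k} \Pr[x \in \bV \text{ and } y \in \bV]$ and split into the diagonal ($x = y$) and off-diagonal ($x \neq y$) contributions. The diagonal contributes $\E[\bw] = 1$. For the off-diagonal, the key observation is that any two \emph{distinct} nonzero vectors in $\F_2^n$ are automatically linearly independent over $\F_2$ (since the only nonzero scalar is $1$). Consequently, for such $x \neq y$ the map $v \mapsto (\langle x,v\rangle, \langle y,v\rangle)$ is surjective onto $\F_2^2$, so $\Pr[\langle x, \v{i}\rangle = \langle y, \v{i}\rangle = 0] = 1/4$, and by independence across $i$,
\[
\Pr[x \in \bV,\; y \in \bV] \;=\; 4^{-T} \;=\; \binom{n}{k}^{-2}.
\]
Summing over the $\binom{n}{k}\bigl(\binom{n}{k}-1\bigr)$ ordered pairs gives off-diagonal mass at most $1$, hence $\E[\bw^2] \le 2$.

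Plugging into Paley--Zygmund yields $\Pr[\bw \ge 1] \ge \E[\bw]^2/\E[\bw^2] \ge 1/2$, which is comfortably larger than $0.01$ and completes the proof. I do not anticipate a real obstacle: the whole argument is a textbook second-moment computation, and the only mildly delicate point is observing that distinctness of weight-$k$ vectors over $\F_2$ already implies $\F_2$-linear independence, which is what makes the pairwise events $\{x \in \bV\}$ and $\{y \in \bV\}$ exactly independent and lets the second moment collapse so cleanly.
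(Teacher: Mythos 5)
Your proposal is correct, and its two moment computations ($\E[\bw]=1$ and $\E[\bw^2]\le 2$, via the observation that distinct nonzero vectors of $\F_2^n$ are automatically linearly independent, so the events $\{x\in\bV\}$ and $\{y\in\bV\}$ are exactly independent with $\Pr[x\in\bV,\,y\in\bV]=4^{-T}$) are precisely the computations in the paper; the paper asserts the pairwise independence without spelling out the linear-independence justification you give. Where you diverge is the final inference from the moments to the tail bound: you invoke Paley--Zygmund (equivalently Cauchy--Schwarz, $1=\E[\bw\cdot\mathbf{1}\{\bw\ge 1\}]^2\le \E[\bw^2]\Pr[\bw\ge 1]$) to get $\Pr[\bw\ge 1]\ge \E[\bw]^2/\E[\bw^2]\ge 1/2$, whereas the paper argues by contradiction, writing $\E[\bw]=\sum_{i\ge 1}\Pr[\bw\ge i]$ and bounding $\sum_{i\ge 2}\Pr[\bw\ge i]$ by combining Chebyshev's inequality ($\Var[\bw]\le 1$ gives $\Pr[\bw\ge i]\le (i-1)^{-2}$) with the assumed bound $\Pr[\bw\ge i]\le 0.01$. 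Your route is shorter, more standard, and yields the stronger constant $1/2$ in place of $0.01$; the paper's route uses only Chebyshev plus the layer-cake identity, but is otherwise no more elementary. Either way the lemma follows, so there is no gap in your argument.
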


\begin{proof}
    We will first establish that $\E[\bw] = 1$. To see this, note that 
    \begin{align}
        \Ex[\bw] &= \sum_{\substack{x\in\F_2^n \\ |x| = k}} \Prx_{\bV}\sbra{x\in\bV} \nonumber \\
        &= \sum_{\substack{x\in\F_2^n \\ |x| = k}} \Prx_{\v{i}\sim\F_2^n}\sbra{\langle \v{i}, x\rangle = 0 ~\text{for all}~i\in[T]} \nonumber\\
        &= \sum_{\substack{x\in\F_2^n \\ |x| = k}} \prod_{i=1}^T\Prx_{\v{i}\sim\F_2^n}\sbra{\langle \v{i}, x\rangle = 0 } \nonumber \\
        &= \sum_{\substack{x\in\F_2^n \\ |x| = k}} \frac{1}{2^T} \nonumber \\
        &= 1, \label{eq:mean-w}
    \end{align}
    recalling that $T = \log{n\choose k}$. We will next obtain an upper bound on the second moment of $\bw$. Recalling that $\bw = \sum_{|x| = k} \mathbf{1}\cbra{x\in\bV}$, we have that 
    \begin{align}
        \Ex[\bw^2] &= \Ex_{\bV}\sbra{\pbra{\sum_{|x| = k} \mathbf{1}\cbra{x\in\bV}}^2} \nonumber\\
        &= \Ex_{\bV}\sbra{{\sum_{|x|, |y| = k} \mathbf{1}\cbra{x\in\bV}\cdot\mathbf{1}\cbra{y\in\bV}}} \nonumber\\
        &= \sum_{\substack{x,y\in\F_2^n \\ |x| = |y| = k}} \Prx_{\bV}\sbra{x \in \bV, y\in \bV} \nonumber\\ 
        &= \sum_{\substack{x\in\F_2^n \\ |x| = k}} \Prx_{\bV}\sbra{x \in \bV} + \sum_{\substack{x\neq y \\ |x| = |y| = k}} \Prx_{\bV}\sbra{x \in \bV, y\in \bV}, \nonumber
        \intertext{but when $x\neq y$, note that the events $\cbra{x\in \bV}$ and $\cbra{y\in \bV}$ are independent, and so we have}
        &= 1 + \sum_{\substack{x\neq y \\ |x| = |y| = k}} \Prx_{\bV}\sbra{x \in \bV}\Prx_{\bV}\sbra{y \in \bV} \nonumber\\
        &\leq 2, \label{eq:moment2-w}
    \end{align}
    where the last two expressions relied on our choice of $T$.
    
    We now turn to the proof of \Cref{lemma:DONE}. Suppose, for the sake of contradiction, that $\Pr[\bw\geq 1] \leq 0.01$. Now, from \Cref{eq:mean-w} we have that 
    \[1 = \E[\bw] = \sum_{i\geq 1}\Prx[\bw \geq i].\]
    By assumption we then have that 
    \begin{equation} \label{eq:to-contradict}
       0.01\geq \Pr[\bw \geq 1]=1-\sum_{i\geq 2}\Pr[\bw\geq i], \qquad\text{and so}\qquad \sum_{i\geq 2}\Pr[\bw\geq i] \geq 0.99. 
    \end{equation}
    Note, however, that by \Cref{eq:mean-w,eq:moment2-w} it follows that $\Var[\bw] \leq 1$ and so by Chebyshev's inequality we have that for any $i \geq 2$,
    \[\Pr[\bw\geq i] \leq \frac{1}{(i-1)^2}.\]
    We also have that for $i\geq 2$, $\Pr[\bw\geq i] \leq \Pr[\bw\geq 1] \leq 0.01$ by assumption, and so 
    \begin{align*}
        \sum_{i\geq 2}\Pr[\bw\geq i] &\leq \sum_{i\geq 2}\min\cbra{\frac{1}{(i-1)^2}, 0.01}\\
        &\leq \sum_{i=2}^{15} 0.01 + \sum_{i > 15} \frac{1}{(i-1)^2}\\
        &\leq 0.15 + 0.25\\
        &\leq 0.4,
    \end{align*}
    which contradicts \Cref{eq:to-contradict}. It follows that $\Pr[\bw\geq 1] > 0.01$, completing the proof.
\end{proof}

\section*{Acknowledgements}
S.N. is supported by NSF grants IIS-1838154, CCF-2106429, CCF-2211238, CCF-1763970, and CCF-2107187. The authors would like to thank Lucas Gretta, Fermi Ma and Avishay Tal for helpful discussions. This work was partially completed while the authors were visiting the Simons Institute for the Theory of Computing. 

\bibliographystyle{alpha}
\bibliography{references}

\end{document}